\newtheorem{thm}{Theorem}
\newtheorem{lem}{Lemma}
\newtheorem{defn}{Definition}
\newtheorem{remk}{\textit{Remark}}
\theoremstyle{definition}
\newtheorem{ex}{\textbf{Example}}
\begin{document}
%
\title{\LARGE{Topological Interference Management with just Retransmission: \\ What are the ``Best'' Topologies?}}

\author{\IEEEauthorblockN{Navid Naderializadeh, Aly El Gamal and A. Salman Avestimehr}
  \IEEEauthorblockA{
    Department of Electrical Engineering,
    University of Southern California\\
    Emails: naderial@usc.edu, aelgamal@usc.edu, avestimehr@ee.usc.edu} }


%


\maketitle

\begin{abstract}
We study the problem of interference management in fast fading wireless networks, in which the transmitters are only aware of network topology. We consider a class of retransmission-based schemes, where transmitters in the network are only allowed to resend their symbols in order to assist with the neutralization of interference at the receivers. We introduce a necessary and sufficient condition on the network topology, under which half symmetric degrees-of-freedom (DoF) is achievable through the considered retransmission-based schemes. This corresponds to the ``best'' topologies since half symmetric DoF is the highest possible value for the symmetric DoF in the presence of interference. We show that when the condition is satisfied, there always exists a set of carefully chosen transmitters in the network, such that by retransmission of their symbols at an appropriate time slot, we can neutralize all the interfering signals at the receivers. Quite surprisingly, we also show that for any given network topology, if we cannot achieve half symmetric DoF by retransmission-based schemes, then there does not exist any linear scheme that can do so. We also consider a practical network scenario that models cell edge users in a heterogeneous network, and show that the characterized condition on the network topology occurs frequently. Furthermore, we numerically evaluate the achievable rates of the DoF-optimal retransmission-based scheme in such network scenario, and show that its throughput gain is not restricted to the asymptotic DoF analysis.
\end{abstract}


%
\IEEEpeerreviewmaketitle
\vspace*{-0.23in}
\section{Introduction}\label{sec:intro}

As wireless networks grow in size and become more and more distributed and heterogeneous, coordination among users and acquisition of channel state information (CSI) gets much more complicated. As a result, there has been a growing interest in development of interference management techniques that rely only on a minimal level of channel state knowledge at the transmitters.

Our focus in this paper is on the scenario in which interference management primarily relies on a coarse knowledge about channel states in the network, namely the  ``topology'' of the network. Network topology simply refers to the 1-bit feedback information for each link between each transmitter and each receiver, indicating whether or not the signal of the transmitter is received above the noise floor at the corresponding receiver. There have been several prior works in the literature that have considered this scenario. In~\cite{localview}, it is assumed that transmitters are aware of the network topology as well as the actual channel gains within a local neighborhood; the concept of normalized capacity was introduced by characterizing the maximum achievable rate using this channel knowledge as a fraction of the achievable rate using global channel state information. In \cite{jafar}, the authors considered a more restrictive scenario, called topological interference management (TIM), in which the transmitters are only aware of the topology with no information about the channel coefficients. It has been shown  that if the channel gains in the network remain constant for a sufficiently large time, then topological interference management is closely connected to the classical index coding problem (see, e.g., \cite{birk,baryossef,rouayheb}), and via this connection, a class of linear interference management schemes has been introduced which rely on the network topology knowledge to align the interference over time at unintended receivers.

In~\cite{topology,topology_isit}, the authors considered a class of retransmission-based schemes that only allows transmitters to resend their symbols in order to assist with the neutralization of interference at the receivers. Besides their simplicity, these schemes are robust to channel variations over time and were shown to be optimal in terms of the symmetric degrees-of-freedom (DoF) in many classes of topologies via the outer bounds developed in \cite{topology,topology_isit}. In this paper, we aim to characterize the ``best'' topologies for topological interference management with just retransmission; i.e., the topologies for which half symmetric DoF is achievable by a retransmission-based scheme. It is easy to see that in topologies where at least one transmitter is interfering at an undesired receiver, the symmetric DoF is upper bounded by $\frac{1}{2}$. Thus, the topologies in which half symmetric DoF is achievable represent the best topologies that one can hope for (from the degrees-of-freedom perspective).

When such retransmission restriction is not imposed on the TIM schemes and the channel gains in the network are assumed to remain constant for a long-enough period of time, the necessary and sufficient condition on the network topology for achieving half symmetric DoF was characterized in \cite{jafar}. For the case of retransmission-based schemes, a sufficient condition for the achievability of half symmetric DoF was derived in \cite{multialign}. However, the characterization of a necessary and sufficient condition has remained open. Moreover, it is not known whether or not extending the transmission schemes to any general linear scheme makes the half symmetric DoF achievable in a larger class of topologies.

In this work, we close the gap in the results on half symmetric DoF by introducing a necessary and sufficient condition under which the symmetric DoF of $\frac{1}{2}$ is achievable under retransmission-based schemes in the case of time-varying channels (i.e., without requiring the channels to remain fixed for a long enough time). The condition is based on a new concept of \emph{reduced conflict graph} which is a variation of the regular conflict graph used in previous works. Basically, the condition states that for a given topology, half symmetric DoF is feasible (under retransmission-based schemes) if and only if the corresponding reduced conflict graph is bipartite. Quite surprisingly, we also show that the same result holds even if we allow for all linear coding schemes. In other words, we show that for any given network topology with time-varying channels, if we cannot achieve half symmetric DoF by retransmission-based schemes, then there does not exist any linear scheme that can do so.

In order to prove the result, we show that when the condition is satisfied, there always exists a set of carefully-chosen transmitters in the network, such that by retransmission of their symbols at an appropriate time slot, we can neutralize all the interference at the receivers. More specifically, we show that only the transmitters which correspond to vertices in the reduced conflict graph that have no outgoing edges need to retransmit their symbols to allow for neutralization of interference at all receivers. Furthermore, through a novel upper bounding technique, we show that if the condition on the reduced conflict graph is not satisfied, no linear scheme can achieve half symmetric DoF, hence proving the necessity of the condition. The upper bound is based on the fact that if half symmetric DoF is achievable in a topology, then any pair of transmitters which are causing interference at a third receiver should both be silent for half the time slots; i.e., their transmission should be ``physically'' aligned.

We also numerically analyze the half symmetric DoF feasibility condition under a network setting that models cell edge users in heterogeneous networks. We first evaluate the fraction of network realizations in which this condition is satisfied, and show that this fraction is significant for a reasonable range of the number of users in the network. Furthermore, we compare the distribution of the actual user rates achieved by our scheme with half the interference-free point-to-point rates and observe that the gap between them is significantly low, hence demonstrating that the considered retransmission-based coding scheme performs extremely well in neutralizing the interfering signals over time and its promising performance generalizes beyond the DoF analysis. We also illustrate that the considered scheme outperforms interference avoidance for representative example topologies. It is important to point out that another implication of the numerical results is that even if we account for the interference from all the transmitters (as opposed to the \emph{network topology} where the weak interfering links are entirely removed from the network), the coding scheme still provides gains over the benchmark schemes.

\section{System Model}\label{sec:model}
In this paper, we study $K$-user interference networks composed of $K$ transmitters $\{\text{T}_i\}_{i=1}^K$ and $K$ receivers $\{\text{D}_i\}_{i=1}^K$. Each transmitter $\text{T}_i$ intends to deliver a message $W_i\in\mathcal{W}_i$ to its corresponding receiver $\text{D}_i$. We assume that each receiver is subject to interference only from a specific subset of the other transmitters and the interference power that it receives from the rest of them is below noise level. This leads to a \emph{network topology} indicating the interference pattern in the network. 

We assume that transmitters have no knowledge about the realization of the channel gains except for the topology of the network. However, the receivers have full channel state information. We refer to this assumption as no CSIT (channel state information at the transmitters) beyond topology. Each transmitter $\text{T}_i$ encodes its message $W_i$ to a vector $\mathbf{x}_i^n$ of length $n$ and transmits this vector over $n$ time slots. Each receiver $\text{D}_j$ receives the following signal over $n$ time slots
\begin{align}\label{signal}
\mathbf{y}_j^n=H_{jj}^n \mathbf{x}_j^n + \sum_{i\in \text{IN}_j} H_{ij}^n \mathbf{x}_i^n + \mathbf{z}_j^n,
\end{align} 
where for each $i,j\in\{1,...,K\}$ such that $\text{T}_i$ and $\text{D}_j$ are connected together in the network topology, $H_{ij}^n$ is an $n\times n$ diagonal matrix with the $k^{\textrm{th}}$ diagonal element being equal to the value of the channel coefficient  between $\text{T}_i$ and $\text{D}_j$ in time slot $k$, $\text{IN}_j$ denotes the set of transmitters causing interference at $\text{D}_j$ and $\mathbf{z}_j^n$ denotes the noise vector at $\text{D}_j$ where each of its elements is an i.i.d. $\mathcal{CN}(0,N)$ random variable, $N$ being the noise variance. There is a power constraint of $\frac{1}{n}\mathbb{E}\left[\|\mathbf{x}_i^n\|^2\right]\leq P$ on the transmit signals. The non-zero channel gains $\left\{\text{diag}(H_{ij}^n): \forall j, \forall i\in \text{IN}_j\cup\{j\}\right\}$ are assumed to be drawn from an  arbitrary joint continuous distribution. This assumption is what we will refer to as ``time-varying'' channels.\footnote{The case where the channel gains remain constant for a long period of time does not fall into this category since the corresponding joint channel distribution is not continuous.} Upon receiving this signal, each receiver $\text{D}_j$ will generate an estimate of its desired message, called $\hat{W}_j$, based on its knowledge of the channel gains. The probability of error $\mathbb{P}_e$ is defined as the probability that at least one receiver makes an error in decoding its message; i.e.,
\begin{align*}
\mathbb{P}_e=\text{Pr}\left[\bigvee_{i=1}^K (\hat{W}_i\neq W_i)\right].
\end{align*}

The rate tuple $(R_1,...,R_K)$ where $R_i=\frac{\log |\mathcal{W}_i|}{n}, i\in\{1,...,K\}$ is said to be achievable if there exists an encoding and decoding scheme for which the probability of error $\mathbb{P}_e$ goes to zero as the blocklength $n\rightarrow\infty$. The DoF tuple $(d_1,...,d_K)$ is said to be achievable if there exists an achievable rate tuple $(R_1,...,R_K)$ such that $d_i=\lim_{P\rightarrow\infty} \frac{R_i}{\log P},\forall i\in\{1,...,K\}$. The symmetric DoF $d_{sym}$ is defined as the supremum $d$ for which the DoF tuple $(d,...,d)$ is achievable.
Next we will describe the classes of linear and retransmission-based coding schemes.

\subsection{Description of Linear Schemes}
Assume that each transmitter $\text{T}_i$ intends to send a vector $\mathbf{w}_i\in\mathbb{C}^{m_i}$ of $m_i$ complex symbols to its desired receiver $\text{D}_i$. This message is encoded to the transmit vector $\mathbf{x}_i^n=V_i^n \mathbf{w}_i$, where $V_i^n$ denotes the $n \times m_i$ linear \emph{precoding} matrix of transmitter $i$, which can only depend on the knowledge of topology. Under such a scheme, the received signal of receiver $j$ over the $n$ time slots in (\ref{signal}) can be rewritten as
\begin{align*}
\mathbf{y}_j^n=(H_{jj}^n V_j^n) \mathbf{w}_j+\sum_{i\in \text{IN}_j} (H_{ij}^n V_i^n) \mathbf{w}_i+\mathbf{z}_j^n.
\end{align*}

At receiver $j$, the interference subspace denoted by $\mathcal{I}_j$ can be written as
\vspace*{-0.03in}
\begin{align}
\mathcal{I}_j=\bigcup_{i\in \text{IN}_j} \mathsf{colspan}(H_{ij}^n V_i^n).
\end{align}
In order to decode its desired symbols, receiver $j$ projects its received signal subspace given by $\mathsf{colspan} (H_{jj}^n V_j^n)$ onto the subspace orthogonal to $\mathcal{I}_j$, and its successful decoding condition can be expressed as
\begin{align}\label{eq:decodability}
\mathsf{dim}\left(\mathsf{Proj}_{{\cal I}_{j}^c} \mathsf{colspan} \left(H_{jj}^n V_j^n\right)\right)\geq m_j.
\end{align}

If the above decodability condition is satisfied at all the receivers $\{\text{D}_j\}_{j=1}^K$, then the DoF tuple $\left(\frac{m_1}{n},...,\frac{m_K}{n}\right)$ is achievable under the aforementioned linear scheme.

\subsection{Description of Retransmission-Based Schemes}
Retransmission-based schemes are linear schemes where each transmitter $i$ intends to send only one symbol over $n$ time slots, i.e., $m_i=1$ and $V_i^n$ is an $n \times 1$ vector. In each time slot $k$, the transmitter is either inactive (the $k^{th}$ element of $V_i^n$ is $0$)  or (re-)transmits the same symbol (the $k^{th}$ element of $V_i^n$ is $1$). Therefore, the $n \times 1$ precoding vector $V_i^n$ is binary. If all the transmitted symbols are successfully decoded, then $\frac{1}{n}$ DoF is achieved for each user. 
\section{Main Result}\label{sec:result}

In this section, we state our main result in Theorem~\ref{thm:symdof} and prove it. We start by making the following definition of \emph{reduced conflict graphs}.

\begin{defn}\label{redgraph}
The reduced conflict graph of a $K$-user interference network is a directed graph $G=(V,A)$ with $V=\{1,2,...,K\}$. As for the edges, vertex $i$ is connected to vertex $j$ (i.e., $(i,j)\in A$) if and only if $i \neq j$ and the following two conditions hold:
\begin{enumerate}
\item Transmitter $i$ is connected to receiver $j$.
\item $\exists s,k \in \{1,2,\ldots,K\}\backslash\{i\}, s\neq k$ such that both transmitter $i$ and transmitter $s$ are connected to receiver $k$. 
\end{enumerate}
\end{defn}

Clearly, the difference between the reduced conflict graph defined above and the regular conflict graph (see, e.g., \cite[Ch. 2.2]{conflict}) is the additional condition 2 in Definition \ref{redgraph}, which suggests that the reduced conflict graph only accounts for the interference caused by the transmitters which are accompanied by at least one other interfering transmitter at some receiver node in the network. This implies that the set of edges in the reduced conflict graph is a subset of the edges in the regular conflict graph, hence the name ``reduced'' conflict graph.

Having the above definition, we state our main result.
\begin{thm}\label{thm:symdof}
For a $K$-user interference network with arbitrary topology and time-varying channels, half symmetric DoF can be achieved through retransmission-based schemes over a finite symbol extension if the reduced conflict graph of the network is bipartite. Furthermore, any linear scheme cannot be used to achieve half symmetric DoF over a finite symbol extension if the reduced conflict graph is not bipartite.
\end{thm}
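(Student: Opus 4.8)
The plan is to prove the two directions separately, leaning on one structural observation about Definition~\ref{redgraph}: condition~2 constrains only the source vertex $i$, never the target $j$. Hence, writing $B$ for the set of transmitters satisfying condition~2 --- equivalently, those interfering at \emph{some} receiver that has at least two interferers --- a vertex of the reduced conflict graph has an outgoing edge if and only if it lies in $B$, in which case its out-neighbors are exactly the receivers it interferes with. Two consequences I will use repeatedly: any receiver $D_k$ with two or more interferers receives an incoming edge from each of them, and all of those interferers lie in $B$.

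\emph{Achievability.} Assuming $G$ bipartite, I would fix a proper $2$-coloring $c:V\to\{1,2\}$ and work over a $2$-symbol extension. Each $i\in B$ sends its symbol only in slot $c(i)$ (precoding vector a standard basis vector of $\mathbb{C}^2$), and each $i\notin B$ retransmits in both slots (precoding vector $(1,1)$); this realizes the informal statement that only vertices with no outgoing edge need to retransmit. Decodability at $D_j$ is then a short case check. If $D_j$ has $\ge 2$ interferers, they all lie in $B$ and all point to $j$ in $G$, so they share the color $\neq c(j)$ and hence the same basis vector, so their aggregate interference is confined to one coordinate axis, which the desired signal (the other axis if $j\in B$, a generic direction if $j\notin B$) escapes. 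If $D_j$ has one interferer, the interference is one-dimensional by default, and genericity of the channels --- together with the fact that an edge between two $B$-vertices is properly colored, so the desired axis differs from the interfering one --- shows the desired signal is not absorbed. Thus each receiver recovers one symbol per two slots: $\frac{1}{2}$ symmetric DoF.

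\emph{Converse.} Suppose some linear scheme over a finite extension of length $n$ achieves $\frac{1}{2}$ symmetric DoF; then $n$ is even, every $D_j$ decodes $m_j=n/2$ symbols, and each $V_j^n$ has full column rank $n/2$. Step one is a ``physical alignment'' lemma: if $T_i$ and $T_s$ both interfere at $D_k$, then the decodability condition at $D_k$ forces $\dim\mathcal{I}_k\le n/2$, while $\mathsf{colspan}(H_{ik}^nV_i^n)$ and $\mathsf{colspan}(H_{sk}^nV_s^n)$ each already have dimension $n/2$; hence they coincide, and since this holds for generic channel diagonals, $\mathsf{colspan}(V_i^n)=\mathsf{colspan}(V_s^n)$ and this common subspace is spanned by $n/2$ of the standard basis vectors (``$T_i$ and $T_s$ are silent in the same $n/2$ slots''). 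Step two propagates this around $G$: for $i\in B$ let $S_i\subseteq\{1,\ldots,n\}$ index this coordinate subspace and set $\chi(i)=S_i$; for $i\notin B$ every edge at $i$ is incoming, from $B$-vertices interfering at $D_i$, which by step one share a single index set $S$, so set $\chi(i)=S^c$ (isolated vertices get an arbitrary value). Using decodability at $D_j$, which makes $\mathsf{colspan}(H_{jj}^nV_j^n)$ complementary to the coordinate subspace of any $i\in B$ interfering at $D_j$, one verifies that every edge of $G$ joins two vertices with complementary $\chi$-values. Hence $\chi$ takes only two values on each connected component, exhibiting a bipartition of $G$ --- contradicting the hypothesis.

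I expect the crux to be the converse, and within it the two genericity arguments: that ``$\mathsf{colspan}(V_i^n)$ equals a diagonal scaling of $\mathsf{colspan}(V_s^n)$ for a positive-measure set of scalings'' actually forces a coordinate subspace (cleanest by reparametrizing in the reciprocals of the diagonal entries, which linearizes the constraint), and that $\chi$ is well defined, i.e.\ that different incoming neighbors of a vertex really do carry the same index set. The achievability direction reduces, once the coloring is fixed, to the small case analysis over the four binary $2$-dimensional precoders indicated above.
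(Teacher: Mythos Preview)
Your proposal is correct and follows essentially the same route as the paper: the achievability scheme (two slots, $B$-vertices transmit in their color slot, non-$B$ vertices in both, followed by the same case split on the number of interferers at $D_j$) is identical, and your converse mirrors the paper's Lemmas~\ref{lem:ub} and~\ref{lem:conflict}---your ``physical alignment'' step is Lemma~\ref{lem:ub} rephrased from zero-rows to coordinate subspaces, and your complementarity-via-decodability step is Lemma~\ref{lem:conflict}. The only cosmetic difference is that you package the bipartition through an explicit labeling $\chi$, whereas the paper argues $2$-colorability of the $B$-vertices first and then extends to the rest.
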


We now state the following remarks to highlight some implications of the result.
\begin{remk}\label{remk:besttopologies}
As we show later in Section~\ref{sec:analysis}, the condition of Theorem~\ref{thm:symdof} for achieving half symmetric DoF with retransmission occurs frequently in scenarios of practical interest. 
\end{remk}
\begin{remk}\label{remk:avoidance}
It is easy to see that the interference avoidance scheme, which schedules the users in an independent set of the regular conflict graph of the network, can achieve half symmetric DoF if and only if the regular conflict graph is bipartite. Since the reduced conflict graph is the same as the regular conflict graph with the removal of edges that do not satisfy condition $2$ in Definition~\ref{redgraph}, there are examples where the reduced conflict graph is bipartite while the regular conflict graph is not. For these examples, half symmetric DoF is achievable by a retransmission-based scheme, while it cannot be achieved by interference avoidance. We show in Section~\ref{sec:analysis} that there are many such examples in practical settings.
\end{remk}

We now show a representative example where the achieved symmetric DoF by retransmission cannot be achieved by interference avoidance.

\begin{ex}
A 6-user topology for which half symmetric DoF is achievable by a retransmission-based scheme, but it is not achievable by interference avoidance is depicted in Figure \ref{ex_gain}.
\vspace*{-.28in}
\begin{figure}[h]
\centering
\begin{subfigure}{0.17\textwidth}
\centering
\includegraphics[trim = 2.68in 2.3in 5.4in 2.2in, clip,width=0.78\textwidth]{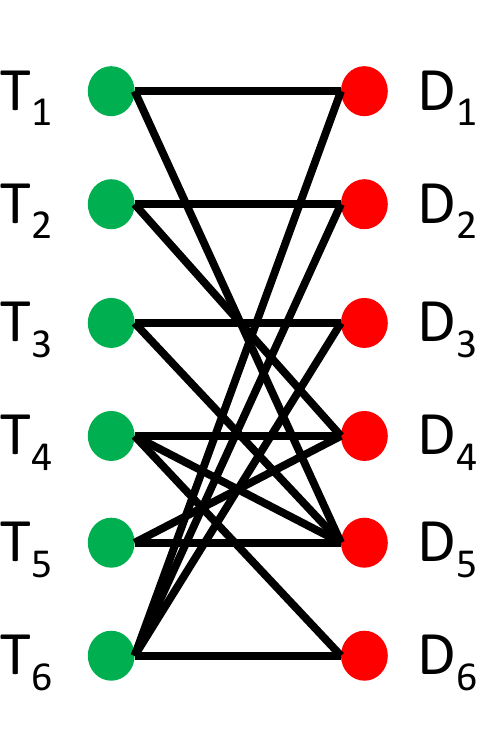}
\caption{}
\label{ex_gain}
\end{subfigure}
~
\begin{subfigure}{0.17\textwidth}
\centering
\includegraphics[trim = 5.4in 2.0in 1.9in 1.8in, clip,width=0.87\textwidth]{ex_top_avoidance}
\caption{}
\label{conf_graph}
\end{subfigure}
\caption{(a) A topology in which half symmteric DoF is achievable, but not by interference avoidance, and (b) the corresponding regular and reduced conflict graphs.}
\end{figure}

\noindent Figure \ref{conf_graph} shows the regular and reduced conflict graph of this network. The solid black edges are the edges in the reduced conflict graph and the dashed red edges are the ones that exist in the regular conflict graph but are removed in the reduced conflict graph. It is clear that the reduced conflict graph is bipartite (with the two partite sets being $\{1,3,4\}$ and $\{2,5,6\}$), hence $d_{sym}=\frac{1}{2}$ is achievable in this topology. However the addition of the dashed red edges in the regular conflict graph removes the bipartiteness property of the graph and therefore  interference avoidance cannot achieve half symmetric DoF in this topology. In fact, since the chromatic number of the regular conflict graph in Figure \ref{conf_graph} is 3, interference avoidance can only achieve a symmetric DoF of $\frac{1}{3}$.
\end{ex}

\begin{remk}
In \cite{jafar}, the scenario where the channel gains remain constant for a sufficiently long period of time was considered and linear schemes based on aligning the interference at unintended receivers were studied. It was shown that half symmetric DoF is achievable if and only if there is no conflict between any two nodes that cause interference at a third receiver, i.e., there is no internal conflict within an alignment set. It is easy to show that the condition in Theorem \ref{thm:symdof} implies the absence of internal conflicts but the opposite is not true; this is consistent with intuition as we make no assumption on the coherence time of the channel. Figure \ref{cex} shows an example topology in which there is no internal conflict, but the condition of Theorem \ref{thm:symdof} is not satisfied.
\begin{figure}[t]
\centering
\includegraphics[trim = 1.48in 2.3in 6.3in 2.5in, clip,width=.15\textwidth]{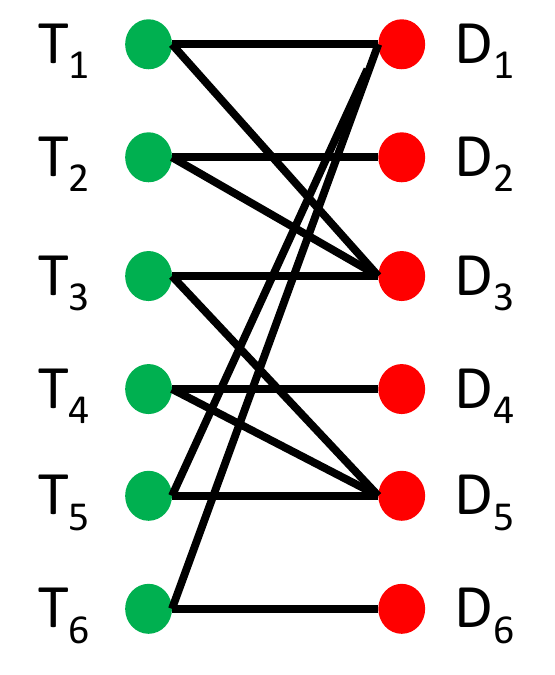}
\caption{A topology in which there is no internal conflict, but the reduced conflict graph is not bipartite.}
\label{cex}
\end{figure}
Interestingly, we show in Section~\ref{sec:analysis} that such topologies for which relying on channels remaining constant over time and using linear schemes to align the interference at unintended receivers can achieve half symmetric DoF while retransmission cannot achieve it comprise a negligible fraction of possible topologies in a heterogeneous network scenario of practical interest.
\end{remk}

Theorem~\ref{thm:symdof} follows from the converse and achievability proofs provided in Sections~\ref{sec:ub} and~\ref{sec:lb}, respectively.
\subsection{Converse}\label{sec:ub}
In the first part of the proof, we show that if the reduced conflict graph $G$ is not bipartite, then no linear coding scheme can be used to achieve half DoF for each user by coding over $n$ time slots. Without loss of generality, we assume that $n$ is even and prove that $\frac{1}{2}$ DoF cannot be achieved simultaneously for all users. We first prove the following lemma.
\begin{lem}\label{lem:ub}
For any two transmitters $i,j \in \{1,\ldots,K\}, i \neq j$, if $\exists k\in\{1,\ldots,K\}, k\notin\{i,j\}$ such that both transmitter $i$ and transmitter $j$ are connected to receiver $k$ and $\frac{1}{2}$ DoF is achieved for each of the messages $i,j$ and $k$, then there exist $\frac{n}{2}$ time slots for which both transmitter $i$ and transmitter $j$ are inactive. More precisely, $[V_i^n ~V_j^n]$ has $\frac{n}{2}$ zero rows.
\end{lem}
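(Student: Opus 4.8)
The plan is to convert the three DoF hypotheses into dimension constraints on the precoders, deduce a ``physical alignment'' of $\text{T}_i$ and $\text{T}_j$ from the decodability requirement at $\text{D}_k$, and then exploit the binary structure of the precoders together with the genericity of the channel gains to count the active slots exactly. First note that ``$\frac12$ DoF is achieved for message $\ell$'' means, in the framework of linear schemes, that $m_\ell=n/2$ and that~(\ref{eq:decodability}) holds at $\text{D}_\ell$ for the generic channel realization, so $V_\ell^n$ is $n\times(n/2)$; and since the left-hand side of~(\ref{eq:decodability}) is at most $\mathsf{rank}(H_{\ell\ell}^n V_\ell^n)=\mathsf{rank}(V_\ell^n)$ (a projection cannot increase dimension and $H_{\ell\ell}^n$ is an invertible diagonal matrix), we get $\mathsf{rank}(V_\ell^n)=n/2$ for $\ell\in\{i,j,k\}$. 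In particular $\mathsf{colspan}(H_{ik}^n V_i^n)$ and $\mathsf{colspan}(H_{jk}^n V_j^n)$ are both $n/2$-dimensional.

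Next I would look at $\text{D}_k$: since $i,j\in\text{IN}_k$, its interference space contains $\mathsf{colspan}(H_{ik}^n V_i^n)+\mathsf{colspan}(H_{jk}^n V_j^n)$, while~(\ref{eq:decodability}) at $\text{D}_k$ forces $\mathsf{dim}(\mathcal{I}_k)\le n/2$ (the $n/2$-dimensional subspace $\mathsf{colspan}(H_{kk}^n V_k^n)$ must survive the projection onto $\mathcal{I}_k^c$, hence meet $\mathcal{I}_k$ only at the origin). A sum of two $n/2$-dimensional subspaces can have dimension $\le n/2$ only if they coincide, so $\mathsf{colspan}(H_{ik}^n V_i^n)=\mathsf{colspan}(H_{jk}^n V_j^n)=:S$, with $\mathsf{dim}(S)=n/2$. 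Let $A_i,A_j$ be the sets of active slots of $\text{T}_i,\text{T}_j$ (the indices of the nonzero rows of $V_i^n,V_j^n$); then $|A_i|,|A_j|\ge n/2$, and every vector of $S$ vanishes in all coordinates outside $A_i$ and in all coordinates outside $A_j$, since $H_{ik}^n V_i^n$ and $H_{jk}^n V_j^n$ have zero rows there. But for any $t\in A_i$, choosing a column of $V_i^n$ with a $1$ in row $t$, the same column of $H_{ik}^n V_i^n$ lies in $S$ and has a nonzero $t$-th entry; hence $t\in A_j$. Thus $A_i\subseteq A_j$, and by symmetry $A_i=A_j=:A$, so the common zero rows of $[V_i^n~V_j^n]$ are exactly those indexed by the complement of $A$, and the lemma reduces to proving $|A|=n/2$.

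The remaining step is the one I expect to be the main obstacle: ruling out $|A|>n/2$, which is where the binary structure of $V_i^n,V_j^n$ and the continuous channel distribution are indispensable. Restricting to the rows in $A$, write $\widetilde V_i,\widetilde V_j\in\{0,1\}^{|A|\times(n/2)}$ for the resulting submatrices (all rows nonzero, each of rank $n/2$), $D_i,D_j$ for the corresponding invertible diagonal submatrices of $H_{ik}^n,H_{jk}^n$, and $h_t^{(jk)}$ for the $t$-th diagonal entry of $H_{jk}^n$; then the equality of column spans becomes $\mathsf{colspan}(D_i\widetilde V_i)=\mathsf{colspan}(D_j\widetilde V_j)$, i.e.\ $\mathsf{rank}[\,D_i\widetilde V_i~~D_j\widetilde V_j\,]=n/2$. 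Suppose $|A|\ge n/2+1$. Pick a set $B\subset A$ indexing $n/2$ linearly independent rows of $\widetilde V_i$, an index $t_0\in A\setminus B$, and a column $s$ of $\widetilde V_j$ with $(\widetilde V_j)_{t_0,s}=1$ (which exists since row $t_0$ of $\widetilde V_j$ is nonzero); consider the $(n/2+1)\times(n/2+1)$ submatrix of $[\,D_i\widetilde V_i~~D_j\widetilde V_j\,]$ on rows $B\cup\{t_0\}$ and on the $n/2$ columns of the first block together with column $s$ of the second. Specializing $D_i$ to the identity, writing row $t_0$ of $\widetilde V_i$ as $\sum_{t\in B}\alpha_t\,(\text{row }t)$ (possible since $(\widetilde V_i)_{B,:}$ is invertible), and clearing row $t_0$, the determinant of this submatrix becomes $\det\big((\widetilde V_i)_{B,:}\big)\cdot\big(h_{t_0}^{(jk)}-\sum_{t\in B}\alpha_t\,h_t^{(jk)}\,(\widetilde V_j)_{t,s}\big)$, which is a nonzero polynomial in the channel gains because $t_0\notin B$. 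Hence this $(n/2+1)\times(n/2+1)$ minor of $[\,D_i\widetilde V_i~~D_j\widetilde V_j\,]$ is a nonzero polynomial, so it does not vanish at a generic realization, giving $\mathsf{rank}[\,D_i\widetilde V_i~~D_j\widetilde V_j\,]\ge n/2+1$ --- contradicting the equality $=n/2$ forced by decodability at $\text{D}_k$. Therefore $|A|=n/2$, so $[V_i^n~V_j^n]$ has exactly $n/2$ zero rows. Everything except this final polynomial/genericity step is routine dimension counting.
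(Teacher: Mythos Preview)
Your proof is correct and follows a somewhat different, more explicit route than the paper's. Both arguments begin identically, deducing from the rank constraints that $\mathsf{colspan}(H_{ik}^nV_i^n)=\mathsf{colspan}(H_{jk}^nV_j^n)$ is a single $n/2$-dimensional space $S$. From there they diverge. The paper argues by a case split: if either precoder has fewer than $n/2$ zero rows, it produces a vector $\mathbf{v}^n\in S$ with more than $n/2$ nonzero entries and asserts that such a vector almost surely cannot lie in $\mathsf{colspan}(H_{jk}^nV_j^n)$, since $H_{jk}^n$ is drawn from a continuous distribution while $V_j^n$ is channel-independent. You instead first isolate the clean intermediate fact that the active-slot sets coincide ($A_i=A_j=:A$), and then rule out $|A|>n/2$ by exhibiting an explicit nonvanishing $(n/2{+}1)\times(n/2{+}1)$ minor of $[D_i\widetilde V_i\ \ D_j\widetilde V_j]$ as a polynomial in the channel gains. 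Your minor computation makes the genericity step completely concrete, whereas the paper's final step is terser but leaves more to the reader (in particular, that the vector $\mathbf{v}^n$ depends on $H_{ik}^n$ while the target subspace depends on $H_{jk}^n$). One correction you should make: this lemma lives inside the converse for \emph{all} linear schemes, so $V_i^n,V_j^n$ are arbitrary complex $n\times(n/2)$ matrices, not binary; however, nothing in your argument actually uses the $\{0,1\}$ structure you call ``indispensable''---just replace ``a column with a $1$ in row $t$'' by ``a column with a nonzero entry in row $t$'', take $\widetilde V_i,\widetilde V_j\in\mathbb{C}^{|A|\times(n/2)}$, and note that the coefficient of $h_{t_0}^{(jk)}$ in your minor expansion is $(\widetilde V_j)_{t_0,s}\neq 0$ rather than $1$, and the proof goes through unchanged for the general linear case.
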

\begin{proof}
Since $\frac{1}{2}$ DoF is achieved for message $i$, then it follows from the decodability condition in~\eqref{eq:decodability} that rank$\left(H_{ii}^n V_i^n\right) \geq \frac{n}{2}$ for almost all realizations of $H_{ii}^n$. Because of the no-CSIT assumption and the fact that the channel coefficients are drawn from a joint continuous distribution, it also follows that rank$\left(H_{ik}^n V_i^n\right) \geq \frac{n}{2}$ for almost all realizations of $H_{ik}^n$. Similarly, since $\frac{1}{2}$ DoF is achieved for message $j$, it follows that rank$\left(H_{jk}^n V_j^n\right) \geq \frac{n}{2}$ almost surely. Now, if $\frac{1}{2}$ DoF is achieved for message $k$, then it follows from~\eqref{eq:decodability} that rank$\left(\left[H_{ik}^n V_i^n  \quad H_{jk}^n V_j^n\right]\right) \leq \frac{n}{2}$, as otherwise the complement of the interference subspace will have a dimension that is strictly less than $\frac{n}{2}$. 

Consider the case where both $V_i^n$ and $V_j^n$ have $\frac{n}{2}$ all-zero rows and the set of indices for the all-zero rows of $V_i^n$ differs from that of $V_j^n$; i.e., $\left[V_i^n \quad V_j^n\right]$ has less than $\frac{n}{2}$ zero rows. In this case, the matrix $H_{ik}^n V_i^n$ has $\frac{n}{2}$ linearly independent rows and the matrix $H_{jk}^n V_j^n$ has $\frac{n}{2}$ linearly independent rows almost surely, where the indices of the two identified sets of linearly independent rows are not identical. It follows that the matrix $\left[H_{ik}^n V_i^n  \quad H_{jk}^n V_j^n\right]$ has more than $\frac{n}{2}$ linearly independent rows almost surely, thereby contradicting the rank assumption. It hence suffices to assume in the rest of the proof that at least one of $V_i^n$ and $V_j^n$ has less than $\frac{n}{2}$ all-zero rows. 

Assume without loss of generality that $V_i^n$ has less than $\frac{n}{2}$ all-zero rows. It follows that almost surely there exists a vector $\mathbf{v}^n \in \mathsf{colspan}(H_{ik}^n V_i^n)$ which has more than $\frac{n}{2}$ non-zero entries. Since rank$\left(H_{ik}^n V_i^n\right) \geq \frac{n}{2}$, rank$\left(H_{jk}^n V_j^n\right)\geq \frac{n}{2}$ and rank$\left(\left[H_{ik}^n V_i^n  \quad H_{jk}^n V_j^n\right]\right) \leq \frac{n}{2}$, it follows that $\mathsf{colspan}(H_{ik}^n V_i^n)$ is an $\frac{n}{2}$-dimensional space that is identical to  $\mathsf{colspan}(H_{jk}^n V_j^n)$, and hence, $\mathbf{v}^n \in$ $\mathsf{colspan}(H_{jk}^n V_j^n)$ as well. Since $H_{jk}^n$ is drawn from a continuous distribution and the design of the precoding matrix $V_j^n$ cannot depend on the channel realization and $\mathbf{v}^n$ has more than $\frac{n}{2}$ non-zero entries, the probability that $\mathbf{v}^n$ lies in a specific $\frac{n}{2}$ dimensional subspace is zero, which is a contradiction.
\end{proof}

We use the result of Lemma~\ref{lem:ub} with the following lemma to prove the converse.
\begin{lem}\label{lem:conflict}
If transmitter $i$ is connected to receiver $k$, $k \neq i$ and $\frac{1}{2}$ DoF is achieved for each of message $i$ and message $k$ and each of $V_i^n$ and $V_k^n$ has $\frac{n}{2}$ zero rows, then $\left[V_i^n \quad V_k^n\right]$ has no zero rows.
\end{lem}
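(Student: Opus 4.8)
The plan is a short proof by contradiction: a row index that is all-zero in both $V_i^n$ and $V_k^n$ would create an overlap between receiver $k$'s desired signal space and the interference it sees from transmitter $i$, which is incompatible with decoding message $k$.

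First I would identify the relevant column spans with coordinate subspaces. Since $\frac12$ DoF is achieved for message $i$ over $n$ time slots, $m_i=\frac n2$, and~\eqref{eq:decodability} at receiver $i$ forces $\mathrm{rank}(H_{ii}^n V_i^n)\ge\frac n2$, hence $\mathrm{rank}(V_i^n)=\frac n2$ for almost every realization of the diagonal matrix $H_{ii}^n$. As $V_i^n$ has $\frac n2$ zero rows, let $S_i$ (with $|S_i|=\frac n2$) index its remaining rows; then every column of $V_i^n$, and hence of $H_{ik}^n V_i^n$, lies in the coordinate subspace $\mathbb C^{S_i}:=\{\mathbf v\in\mathbb C^n:\ v_t=0\text{ for all }t\notin S_i\}$. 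Since transmitter $i$ is connected to receiver $k$ and $k\ne i$, we have $i\in\text{IN}_k$; and because $H_{ik}^n$ is diagonal with almost surely nonzero entries, $\mathrm{rank}(H_{ik}^n V_i^n)=\frac n2=\dim\mathbb C^{S_i}$ a.s., so $\mathsf{colspan}(H_{ik}^n V_i^n)=\mathbb C^{S_i}$. The identical argument for message $k$ yields a set $S_k$ with $|S_k|=\frac n2$ and $\mathsf{colspan}(H_{kk}^n V_k^n)=\mathbb C^{S_k}$.

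Next I would invoke decodability at receiver $k$. The desired subspace $\mathsf{colspan}(H_{kk}^n V_k^n)=\mathbb C^{S_k}$ has dimension $\frac n2=m_k$, so by~\eqref{eq:decodability} the projection onto $\mathcal I_k^c$ must be injective when restricted to it; since $i\in\text{IN}_k$, this projection annihilates $\mathsf{colspan}(H_{ik}^n V_i^n)=\mathbb C^{S_i}$, so injectivity forces $\mathbb C^{S_k}\cap\mathbb C^{S_i}=\{\mathbf 0\}$, i.e.\ $S_i\cap S_k=\emptyset$. With $|S_i|=|S_k|=\frac n2$ this gives $S_i\cup S_k=\{1,\dots,n\}$, so no row index is zero in both precoders, which is exactly the claim that $[V_i^n\ V_k^n]$ has no zero rows.

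The step needing the most care is the genericity bookkeeping of the first paragraph: one must check that ``$V_i^n$ has $\frac n2$ zero rows'' together with $\mathrm{rank}(V_i^n)=\frac n2$ really pins $\mathsf{colspan}(H_{ik}^n V_i^n)$ down to the full coordinate subspace $\mathbb C^{S_i}$ (rather than a proper subspace of it) for almost every $H_{ik}^n$, and that the finitely many ``almost sure'' events used for the various channel matrices intersect to an almost-sure event. Everything after the column spans are identified with coordinate subspaces is elementary.
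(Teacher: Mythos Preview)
Your proof is correct and follows essentially the same approach as the paper's: both argue that a common zero row forces the desired signal subspace $\mathsf{colspan}(H_{kk}^nV_k^n)$ at receiver $k$ to intersect the interference $\mathsf{colspan}(H_{ik}^nV_i^n)$ nontrivially, so that the projection in~\eqref{eq:decodability} has dimension strictly below $\frac{n}{2}$. The paper states this as a two-line contradiction, while you argue the contrapositive directly and make the coordinate-subspace identifications $\mathsf{colspan}(H_{ik}^nV_i^n)=\mathbb{C}^{S_i}$, $\mathsf{colspan}(H_{kk}^nV_k^n)=\mathbb{C}^{S_k}$ explicit, but the underlying idea is identical.
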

\begin{proof}
Assume otherwise. Since $\frac{1}{2}$ DoF is achieved for message $i$, it follows from the decodability condition of~\eqref{eq:decodability} that rank$\left(H_{ik}^n V_i^n\right)\geq \frac{n}{2}$ almost surely, hence the dimension of the projection of the received signal at receiver $k$ on the complement of the interference subspace is less than $\frac{n}{2}$, i.e., $\mathsf{dim}\left(\mathsf{Proj}_{{\cal I}_{k}^c} \mathsf{colspan} \left(H_{kk}^n V_k^n\right)\right) < \frac{n}{2}$. Thus, it follows from~\eqref{eq:decodability} that message $k$ cannot achieve $\frac{1}{2}$ DoF.
\end{proof}

Now, assume that $\frac{1}{2}$ DoF is achieved for each user in the network. We know from Lemma~\ref{lem:ub} that in the graph $G$, if vertices $i_1,i_2,\ldots,i_k$ have outgoing edges to a vertex $j$, then there are $\frac{n}{2}$ time slots for which transmitters $i_1,i_2,\ldots,i_k$ are silent. We then know from Lemma~\ref{lem:conflict} that if there is an edge from vertex $i$ to vertex $j$ and vertex $j$ has one or more outgoing edges, then in any of the $n$ time slots, exactly one of transmitter $i$ and transmitter $j$ is active. It follows that we can color all the vertices in $G$ that have one or more outgoing edges with two colors, such that any two colored neighboring vertices have different colors and all colored vertices that have outgoing edges to the same vertex have the same color. A valid $2-$coloring for $G$ is then complete by coloring each vertex with no outgoing edges by a different color from all its incoming neighbors. Since we found a valid $2-$coloring for $G$, it follows that $G$ is bipartite.
\vspace*{-.054in}
\subsection{Achievability}\label{sec:lb}
In the second part of the proof, we present a retransmission-based scheme that achieves half DoF for each user if the reduced conflict graph $G$ is bipartite. Consider a coding scheme over two time slots and suppose that each transmitter uses a point-to-point capacity achieving code and whenever it is activated in any of the two time slots, it transmits the selected codeword and otherwise it remains silent. The activation of transmitters is determined by the graph $G$ as follows. Let $P_1$ and $P_2$ be the two partite sets constituting $G$. If vertex $i$ has no outgoing edges, then transmitter $i$ is activated in both time slots. For any remaining user $j$ (corresponding to the nodes that have least one outgoing edge in $G$), if vertex $j$ is in $P_k$, $k\in\{1,2\}$, then transmitter $j$ is active in time slot $k$ and inactive in the other time slot. As an example, for the 6-user topology of Figure \ref{ex_gain}, the retransmission pattern can be written as follows,
\vspace*{-.08in}
\begin{align}\label{code}
\setlength{\arraycolsep}{4pt}
\begin{bmatrix}
V_1^2 & V_2^2 & V_3^2 & V_4^2 & V_5^2 & V_6^2 
\end{bmatrix}
=
\begin{bmatrix}
0 & 1 & 0 & 0 & 1 & 1\\
1 & 0 & 1 & 1 & 0 & 1
\end{bmatrix}.
\end{align}
Each column in~\eqref{code} corresponds to a user and each row corresponds to a time slot. For instance, transmitter 2 sends its codeword in time slot 1 and remains silent in time slot 2, whereas transmitter 6 repeats its codeword in both time slots.

We now show that successful decoding is possible for almost all realizations of the channel coefficients, and hence, $\frac{1}{2}$ DoF is achieved for each user over two time slots. For each receiver $i$ with more than two interfering links, transmitter $i$ is activated in a time slot where all interfering transmitters are silent, and hence, the successful decoding condition of~\eqref{eq:decodability} is guaranteed. In the example of Figure~\ref{ex_gain}, receivers $4$ and $5$ have two interfering links. Transmitters $2$ and $5$ are interfering at receiver $4$, and hence, in~\eqref{code}, transmitters $2$ and $5$ are silent in the second time slot where transmitter $4$ is active. Similarly, transmitters $3$ and $4$ are interfering at receiver $5$, and hence, both are silent in the first time slot where transmitter $5$ is active. For each receiver $i$ with one interfering link, it is either the case that transmitter $i$ is active in a time slot for which the interfering transmitter is silent (as is the case for receiver $6$ in the example), or it is the case that at least one of transmitter $i$ and the interfering transmitter is active in both time slots (as is the case for receivers $1$, $2$ and $3$ in the example); in both cases, the condition in~\eqref{eq:decodability} is satisfied.

\section{Network Analysis}\label{sec:analysis}
In this section, we numerically analyze the half symmetric DoF feasibility condition in a network setting that models cell edge users in heterogeneous networks. We consider a large square cell of side length $R$ in which $a\times b$ macro base stations (BS) are located uniformly on a grid, thereby dividing the whole area into $ab$ partitions. Moreover, there exist $K$ femto base stations located randomly in the entire cell area, with a  distribution that biases their location towards the partition boundaries. Each macro BS will have a corresponding receiver located randomly inside its partition, with a distribution that biases its location towards the center of the partition and each femto BS will have a small cell of radius $r$ around itself, inside which a receiver is located uniformly at random. A realization of such a network model is illustrated in Figure \ref{sim_network}.
\begin{figure}[h]
\centering
\includegraphics[trim = 2.35in .75in 1.9in 1.05in, clip,width=0.25\textwidth]{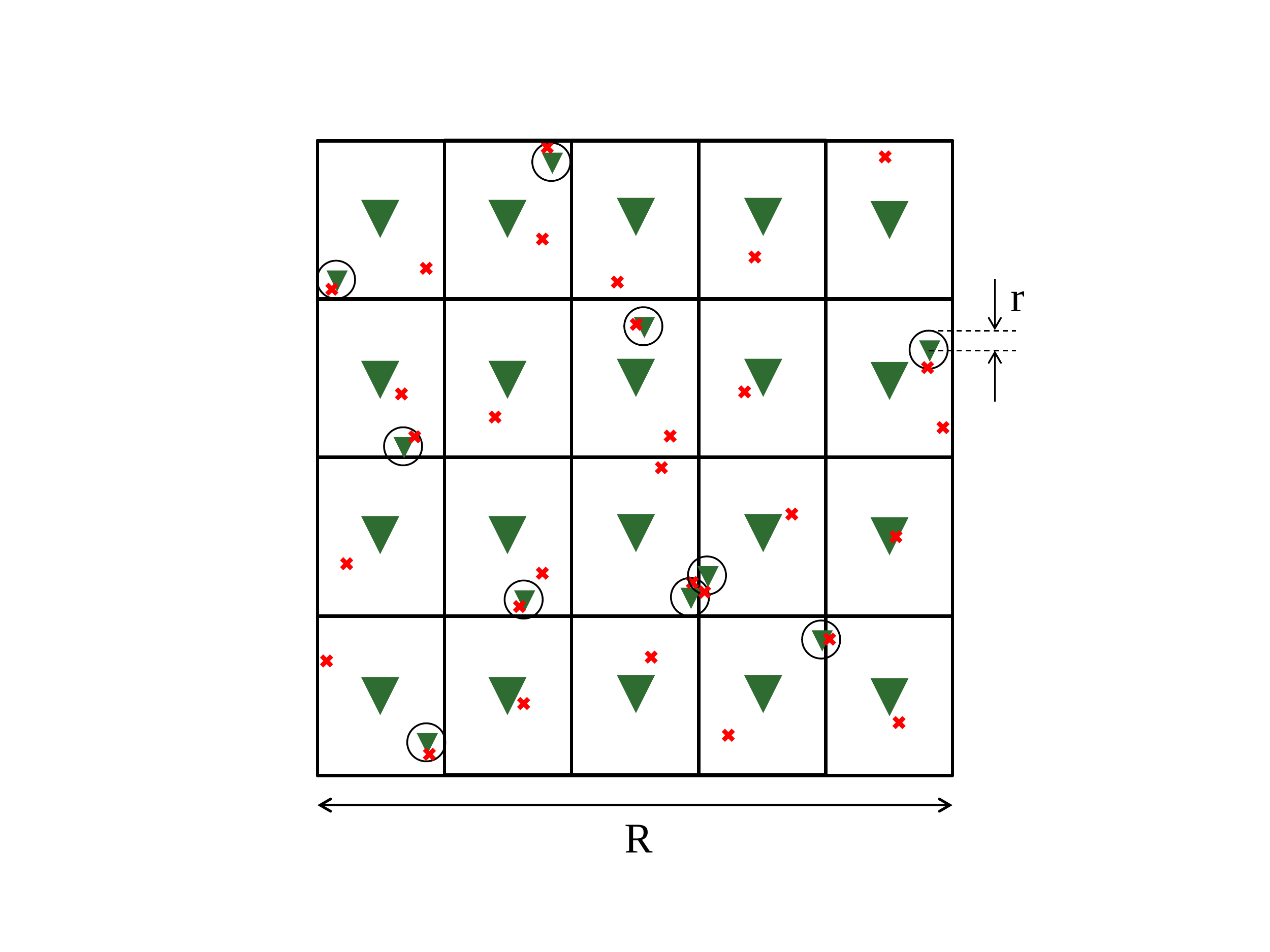}
\caption{The network model under study in which the green triangles and the red crosses are the transmitters and receivers, respectively.}
\label{sim_network}
\end{figure}


As for the parameters, the entire cell side length is taken to be $R=10km$, the femto BS small cell radius is taken to be $r=10m$, the transmit powers of the macro and femto base stations are taken to be 20 dBm and 10 dBm, respectively, and the noise variance is taken to be -100 dBm. We consider the ITU-1411 LoS channel model, in which the transmission loss (in dB) at distance $d$ equals
\begin{align*}
L=L_{bp}+6+\begin{cases}
20\log_{10} \left(\frac{d}{R_{bp}}\right) & \text{if } d\leq R_{bp}\\
40\log_{10} \left(\frac{d}{R_{bp}}\right) & \text{if } d> R_{bp}
\end{cases},
\end{align*}
where $R_{bp}=\frac{4 h_b h_m}{\lambda}$ denotes the breakpoint distance and $L_{bp}=\left|20\log_{10} \left(\frac{\lambda^2}{8\pi h_b h_m}\right)\right|$ denotes the transmission loss at $R_{bp}$, with $h_b$, $h_m$ and $\lambda$ being the BS antenna height, the mobile station antenna height and the transmission wavelength, respectively. The antenna heights are taken to be $1.5m$ and the carrier frequency is set to 2.4 GHz. We also consider shadowing with standard deviation of 10dB and Rayleigh fading with parameter 1 alongside the above channel model.

Each realization of the user locations and the channel gains results in a specific network topology, depending on whether the interfering signals are received above the noise level. We study the following questions:
$1)$ How often is the half symmetric DoF feasibility condition mentioned in Theorem \ref{thm:symdof} satisfied under the considered network setting?
$2)$ What is the actual rate that each user can achieve (beyond DoF analysis)?
$3)$ What are the rate gains of the proposed retransmission-based scheme beyond conventional schemes like interference avoidance?
$4)$ How often can the topological interference alignment scheme of~\cite{jafar} achieve half symmetric DoF and no retransmission-based scheme can achieve it?

Figure \ref{frac_half} shows the fraction of network realizations which satisfy the half symmetric DoF feasibility condition in Theorem \ref{thm:symdof} versus the number of femto base stations $K$.
\begin{figure}[h]
\centering
\includegraphics[trim = .5in 2.75in .59in 2.87in, clip,width=0.37\textwidth]{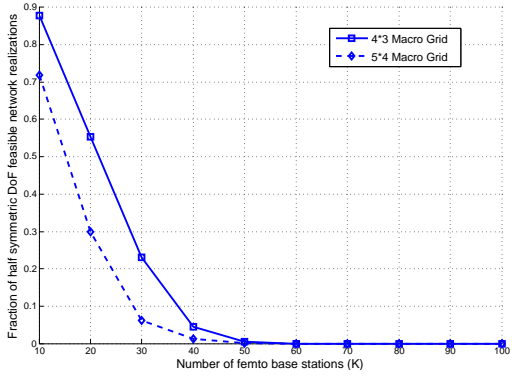}
\caption{Fraction of the network realizations which satisfy the half symmetric DoF feasibility condition.}
\label{frac_half}
\end{figure}
Obviously, increasing $K$ results in denser networks, and hence, reduces the probability that the half symmetric DoF feasibility condition in Theorem \ref{thm:symdof} is satisfied. However, for small values of $K$, we observe that a considerable fraction of the topologies satisfy this condition. For example, for the case of $K=10$, more than 87\% of the topologies satisfy the condition for the $4\times 3$ Macro BS grid and around 72\% of the topologies satisfy the condition for the $5\times 4$ Macro BS grid.

Focusing on the topologies in which the half symmetric DoF feasibility condition in Theorem \ref{thm:symdof} is satisfied, but interference avoidance cannot achieve it, we can evaluate the actual rate that each user can achieve and compare it with half of its point-to-point interference-free rate; i.e., $\frac{1}{2} \log (1+\text{SNR})$. Figure \ref{cdf} compares the CDF of the user rates for our coding scheme to the user rates under interference avoidance and $\frac{1}{2} \log (1+\text{SNR})$ in such topologies. The number of femto base stations is fixed at $K=20$ and the macro base stations are on a $4\times3$ grid.
\begin{figure}[h]
\centering
\includegraphics[trim = .85in 3.05in 1in 3.25in, clip,width=0.38\textwidth]{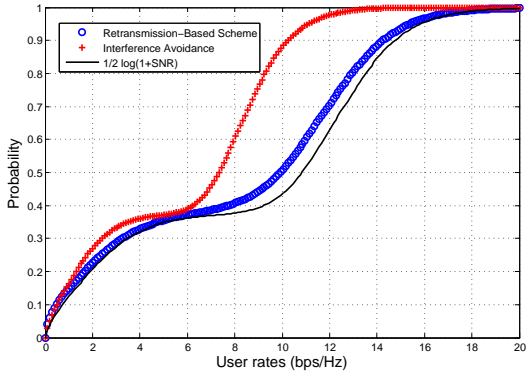}
\caption{Cumulative distribution function of the actual user rates in topologies for which the half DoF feasibility condition is satisfied, but interference avoidance cannot achieve it.}
\label{cdf}
\end{figure}
Interestingly, the distribution of the rates achieved under our simple retransmission-based scheme is very close to half of the interference-free rates. Moreover, there is a considerable gap between the performance of our scheme in comparison with interference avoidance, especially at high rates. This shows that in these topologies, our simple scheme which is only based on retransmission of the signals has a remarkable rate performance aside from its DoF optimality.

We also compare the performance of our retransmission-based scheme with the topological interference alignment scheme of \cite{jafar}, which takes advantage of channels remaining fixed over time and uses linear schemes to align interference over time at unintended receivers. For the $4 \times 3$ macro grid with $K=30$ femto BS's, $23\%$ of the generated topologies satisfy the half symmetric DoF feasibility condition using retransmission-based schemes and $23.1\%$ of topologies satisfy the half symmetric DoF feasibility condition for the alignment scheme of~\cite{jafar}. For the $5 \times 4$ macro grid with $K=30$, $6.1\%$ and $6.3\%$ of topologies satisfy the condition for the retransmission-based and alignment schemes, respectively. Interestingly, we observe that there are very few topologies in which assuming constant channels over time and using linear schemes and alignment of the interference can produce DoF gains over retransmission-based schemes and for most topologies, simpler retransmission-based schemes which are robust to channel variations suffice to achieve the optimal symmetric DoF.

In order to demonstrate the usefulness of our retransmission-based scheme in another cellular setting, we also evaluate our scheme under the hexagonal cell model depicted in Figure \ref{fig:hex_cells}.
\begin{figure}[h]
\centering
\includegraphics[trim = 1.9in .5in 2.45in .82in, clip,width=0.23\textwidth]{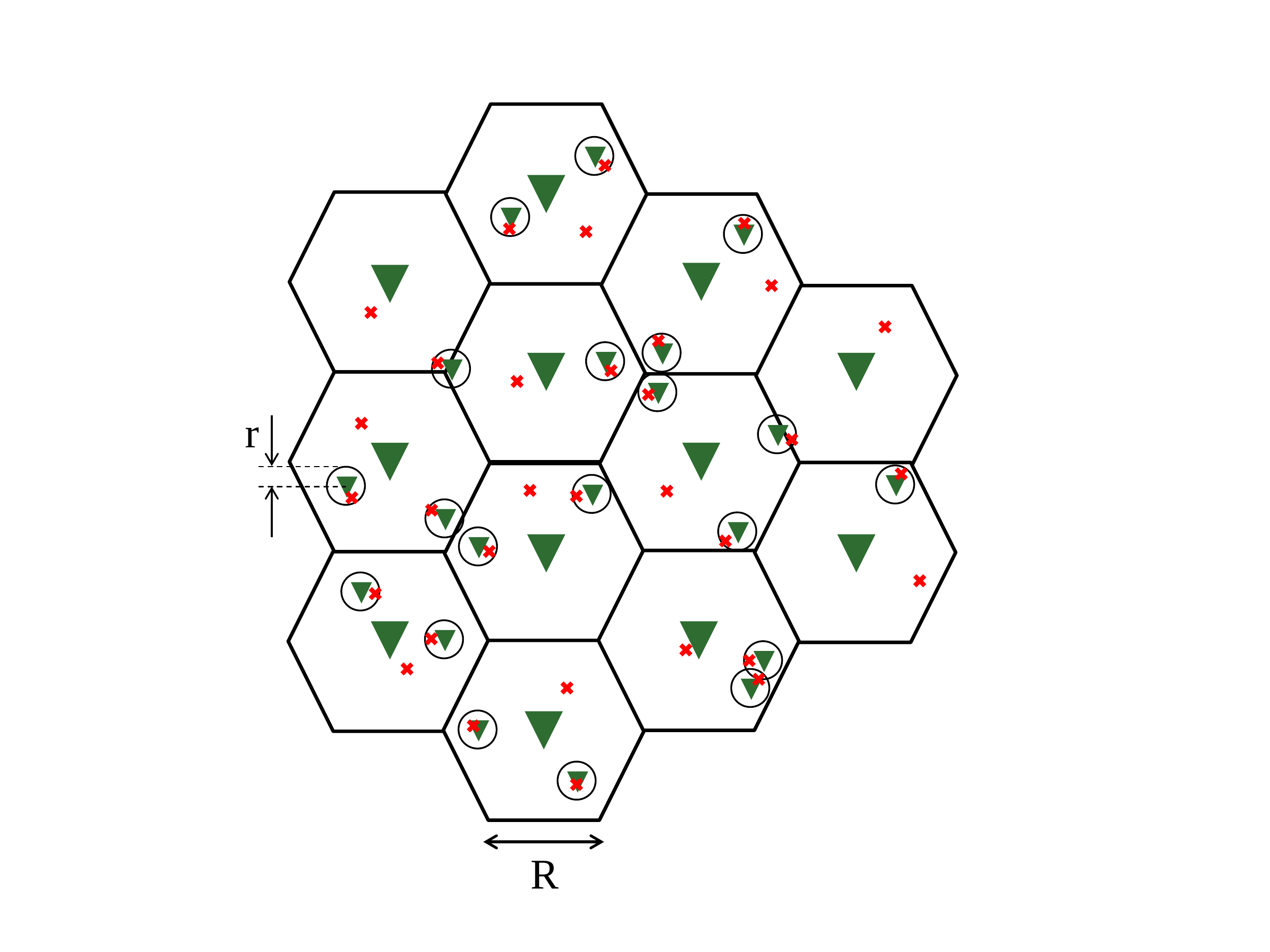}
\caption{Twelve hexagonal cells; the green triangles and the red crosses are the transmitters and receivers, respectively.}
\label{fig:hex_cells}
\end{figure}
This is a typical arrangement of twelve hexagonal cells, each with a side length of $R$. Similar to the previous setting, here there is a macro BS in the center of each cell, serving a receiver located randomly within the cell with a location distribution biased toward the center of the cell. Also, there exist $K$ femto BS's located randomly inside the cells, with locations biased toward the cell edges and each of them serves a corresponding user located uniformly within a distance $r$ of it.\footnote{We have modified the probability density function of the random distances inside a hexagon, derived in \cite{ref:hex}, in order to derive biased distributions for the locations of the macro mobile users and the femto base stations.}

We consider a cell side length of $R=1800m$ in this case in order to make the cell areas almost the same as the previous setting. The remaining parameters are completely kept the same as the previous case. For the sake of comparison, we consider the case of $K=20$ femto base stations and focus on topologies where our retransmission-based scheme can achieve half symmetric DoF while interference avoidance cannot achieve it. The distribution of the user rates is plotted in Figure \ref{fig:rates_hex_cells}.
\begin{figure}[h]
\centering
\includegraphics[trim = .3in 2.8in .3in 2.9in, clip,width=0.43\textwidth]{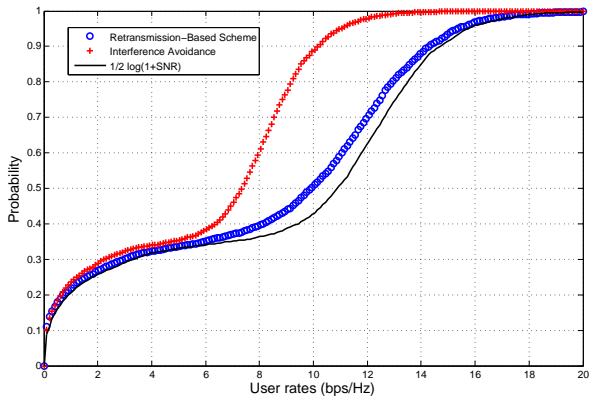}
\caption{Cumulative distribution function of the actual user rates for the case of hexagonal cells in topologies for which the half DoF feasibility condition is satisfied, but interference avoidance cannot achieve it.}
\label{fig:rates_hex_cells}
\end{figure}
We have also plotted the distribution of half of the point-to-point interference-free rates as a benchmark. As the figure depicts, the results look very similar to the case of rectangular cells, therefore suggesting that our retranmission-based scheme can effectively neutralize the interference in the case of hexagonal cells, as well.

\section{Concluding Remarks}\label{sec:conclusion}
In this paper, we focused on the problem of achieving half symmetric DoF for time-varying interference networks with no CSIT beyond topology information. We introduced a necessary and sufficient condition for the feasibility of half symmetric DoF under retransmission-based schemes. The achievability was proved through a careful design of the retransmission pattern, and the converse was shown for any linear scheme by restricting the silence pattern of the transmitters assuming that half symmetric DoF is achievable. We showed that under a particular configuration for heterogeneous networks, the characterized condition is satisfied in a considerable fraction of topologies. We also demonstrated that our simple retransmission-based scheme yields significant rate gains over interference avoidance in many network topologies and its gains are not restricted to high-SNR analysis.

There are numerous interesting future directions to consider following this work. An important direction would be to characterize the symmetric DoF for the topologies in which half symmetric DoF cannot be achieved through retransmission. An important progress has been made in this direction in \cite{ref:randomly_scaled_rows} in which the linear symmetric DoF has been fully characterized for a class of topologies, which includes the topology in Figure \ref{cex}. Second, the application of the retransmission-based schemes in a device-to-device (D2D) network setting can be investigated. Especially, it would be interesting to study how much improvement such schemes can provide alongside with the spectrum sharing mechanisms designed for D2D networks, such as the ITLinQ scheme proposed in \cite{itlinq,itlinq_isit,itlinq_dyspan} and the cached ITLinQ scheme introduced in \cite{cached_itlinq}. Third, the authors in \cite{sezgin} have considered time-varying topologies, in which the network topology is also changing over time. Extending the results of the paper to the case of time-varying topologies would hence be of particular interest. Finally, it is worthwhile to study the effect of cooperation between the nodes in the network (see, e.g., \cite{gesbert}) on the result by checking whether such cooperation can be useful in further extending the class of topologies for which half symmetric DoF is achievable.

\section*{Acknowledgement}
This work is supported by NSF Grants CAREER 1408639, CCF-1408755, NETS-1419632, EARS- 1411244, ONR award N000141310094, research grants from Intel and Verizon via the 5G project, and a gift from Qualcomm.

\bibliographystyle{ieeetran}
{\footnotesize
\bibliography{navid}}

\begin{thebibliography}{10}
\providecommand{\url}[1]{#1}
\csname url@samestyle\endcsname
\providecommand{\newblock}{\relax}
\providecommand{\bibinfo}[2]{#2}
\providecommand{\BIBentrySTDinterwordspacing}{\spaceskip=0pt\relax}
\providecommand{\BIBentryALTinterwordstretchfactor}{4}
\providecommand{\BIBentryALTinterwordspacing}{\spaceskip=\fontdimen2\font plus
\BIBentryALTinterwordstretchfactor\fontdimen3\font minus
  \fontdimen4\font\relax}
\providecommand{\BIBforeignlanguage}[2]{{%
\expandafter\ifx\csname l@#1\endcsname\relax
\typeout{** WARNING: IEEEtran.bst: No hyphenation pattern has been}%
\typeout{** loaded for the language `#1'. Using the pattern for}%
\typeout{** the default language instead.}%
\else
\language=\csname l@#1\endcsname
\fi
#2}}
\providecommand{\BIBdecl}{\relax}
\BIBdecl

\bibitem{localview}
V.~Aggarwal, A.~S. Avestimehr, and A.~Sabharwal, ``On achieving local view
  capacity via maximal independent graph scheduling,'' \emph{Information
  Theory, IEEE Transactions on}, vol.~57, no.~5, pp. 2711--2729, 2011.

\bibitem{jafar}
S.~A. Jafar, ``Topological interference management through index coding,''
  \emph{Information Theory, IEEE Transactions on}, vol.~60, no.~1, pp.
  529--568, 2014.

\bibitem{birk}
Y.~Birk and T.~Kol, ``Informed-source coding-on-demand (iscod) over broadcast
  channels,'' in \emph{INFOCOM '98. Seventeenth Annual Joint Conference of the
  IEEE Computer and Communications Societies. Proceedings. IEEE}, vol.~3, Mar
  1998, pp. 1257--1264.

\bibitem{baryossef}
Z.~Bar-Yossef, Y.~Birk, T.~Jayram, and T.~Kol, ``Index coding with side
  information,'' in \emph{Foundations of Computer Science, 2006. FOCS '06. 47th
  Annual IEEE Symposium on}, Oct 2006, pp. 197--206.

\bibitem{rouayheb}
S.~El~Rouayheb, A.~Sprintson, and C.~Georghiades, ``On the index coding problem
  and its relation to network coding and matroid theory,'' \emph{Information
  Theory, IEEE Transactions on}, vol.~56, no.~7, pp. 3187--3195, 2010.

\bibitem{topology}
N.~Naderializadeh and A.~S. Avestimehr, ``Interference networks with no csit:
  Impact of topology,'' \emph{Information Theory, IEEE Transactions on},
  vol.~61, no.~2, pp. 917--938, Feb 2015.

\bibitem{topology_isit}
------, ``Impact of topology on interference networks with no csit,'' in
  \emph{Information Theory Proceedings (ISIT), 2013 IEEE International
  Symposium on}.\hskip 1em plus 0.5em minus 0.4em\relax IEEE, 2013, pp.
  394--398.

\bibitem{multialign}
T.~Gou, C.~R. da~Silva, J.~Lee, and I.~Kang, ``Partially connected interference
  networks with no csit: Symmetric degrees of freedom and multicast across
  alignment blocks,'' \emph{Communications Letters, IEEE}, vol.~17, no.~10, pp.
  1893--1896, 2013.

\bibitem{conflict}
L.~Meng, A.~Zipf, and S.~Winter, \emph{Map-based mobile services: design,
  interaction and usability}.\hskip 1em plus 0.5em minus 0.4em\relax Springer
  Science \& Business Media, 2008.

\bibitem{ref:hex}
Y.~Zhuang, Y.~Luo, L.~Cai, and J.~Pan, ``A geometric probability model for
  capacity analysis and interference estimation in wireless mobile cellular
  systems,'' in \emph{Global Telecommunications Conference (GLOBECOM 2011),
  2011 IEEE}.\hskip 1em plus 0.5em minus 0.4em\relax IEEE, 2011.

\bibitem{ref:randomly_scaled_rows}
A.~E. Gamal, N.~Naderializadeh, and A.~S. Avestimehr, ``When does an ensemble
  of matrices with randomly scaled rows lose rank?'' \emph{arXiv preprint
  arXiv:1501.07544}, 2015.

\bibitem{itlinq}
N.~Naderializadeh and A.~S. Avestimehr, ``Itlinq: A new approach for spectrum
  sharing in device-to-device communication systems,'' \emph{Selected Areas in
  Communications, IEEE Journal on}, vol.~32, no.~6, pp. 1139--1151, June 2014.

\bibitem{itlinq_isit}
------, ``Itlinq: A new approach for spectrum sharing in device-to-device
  communication systems,'' in \emph{Information Theory (ISIT), 2014 IEEE
  International Symposium on}, June 2014, pp. 1573--1577.

\bibitem{itlinq_dyspan}
------, ``Itlinq: A new approach for spectrum sharing,'' in \emph{Dynamic
  Spectrum Access Networks (DYSPAN), 2014 IEEE International Symposium
  on}.\hskip 1em plus 0.5em minus 0.4em\relax IEEE, 2014, pp. 327--333.

\bibitem{cached_itlinq}
N.~Naderializadeh, D.~T. Kao, and A.~S. Avestimehr, ``How to utilize caching to
  improve spectral efficiency in device-to-device wireless networks,'' in
  \emph{Communication, Control, and Computing (Allerton), 2014 52nd Annual
  Allerton Conference on}.\hskip 1em plus 0.5em minus 0.4em\relax IEEE, 2014,
  pp. 415--422.

\bibitem{sezgin}
S.~Gherekhloo, A.~Chaaban, and A.~Sezgin, ``Resolving entanglements in
  topological interference management with alternating connectivity,'' in
  \emph{Information Theory (ISIT), 2014 IEEE International Symposium on}.\hskip
  1em plus 0.5em minus 0.4em\relax IEEE, 2014, pp. 1772--1776.

\bibitem{gesbert}
X.~Yi and D.~Gesbert, ``Topological interference management with transmitter
  cooperation,'' in \emph{Information Theory (ISIT), 2014 IEEE International
  Symposium on}.\hskip 1em plus 0.5em minus 0.4em\relax IEEE, 2014, pp.
  846--850.

\end{thebibliography}

\end{document}